\newcolumntype{C}[1]{>{\centering\let\newline\\\arraybackslash\hspace{0pt}}m{#1}}
\newtheorem{theorem}{Theorem}
\newtheorem{lemma}[theorem]{Lemma}
\newtheorem{corollary}[theorem]{Corollary}
\newtheorem{assumption}[theorem]{Assumption}
\newtheorem{definition}{Definition}
\newcommand{\AutoAdjust}[3]{\mathchoice{ \left #1 #2  \right #3}{#1 #2 #3}{#1 #2 #3}{#1 #2 #3} }
\newcommand{\Xcomment}[1]{{}}
\newcommand{\InBrackets}[1]{\AutoAdjust{[}{#1}{]}}
\newcommand{\Ex}[2][]{\operatorname{\mathbf E}_{#1}\InBrackets{#2}}
\newcommand{\OPT}{*}
\newcommand{\restate}[2]{\vspace{0.1in} \noindent{\bf #1 (Restatement).}
	{\em #2} }
\newcommand{\Tbestmax}{\mathbf{T}}
\newcommand{\ppcalloc}{\tilde{x}}
\newcommand{\ppcallocagent}{\tilde{x}_{\agentiter}}
\newcommand{\datadist}{{\mathcal D}}
\newcommand{\datafrommech}{D}
\renewcommand{\vec}[1]{\ensuremath{{\bf #1}}}
\newcommand{\epoa}{\textsc{DPoA}}
\newcommand{\poa}{\textsc{PoA}}
\newcommand{\agent}{i}
\newcommand{\agentiter}{\agent}
\newcommand{\allocation}{X}
\newcommand{\alloc}{\mathbf \allocation}
\newcommand{\allocopt}{\alloc^{\OPT}}
\newcommand{\allocagent}{\allocation_{\agentiter}}
\newcommand{\allocoptagent}{\allocation^{\OPT}_{\agentiter}}
\newcommand{\agentallocalt}{\allocation^{\prime}}
\newcommand{\allocaltagent}{\agentallocalt_\agent}
\newcommand{\allocspace}{\mathcal{X}}
\newcommand{\allocinst}{x}
\newcommand{\bidallocation}{x}
\newcommand{\bidallocagent}{\bidallocation_\agent}
\newcommand{\auction}{A}
\newcommand{\bid}{b}
\newcommand{\bids}{\mathbf{\bid}}
\newcommand{\bidagent}{\bid_{\agent}}
\newcommand{\bidspace}{\mathcal{B}}
\newcommand{\mechanism}[1]{\textsc{#1}}
\newcommand{\OPTmech}{\mechanism{Opt}}
\newcommand{\OPTm}{\OPTmech}
\newcommand{\pay}{P}
\newcommand{\payment}{\mathbf{\pay}}
\newcommand{\paymentagent}{\pay_{\agentiter}}
\newcommand{\bidpay}{p}
\newcommand{\bidpayment}{\mathbf{\bidpay}}
\newcommand{\bidpaymentagent}{\bidpay_{\agentiter}}
\newcommand{\paymentspace}{\mathbb{R}^{n}}
\newcommand{\WEL}{\textsc{Welfare}}
\newcommand{\REV}{\textsc{Rev}}
\newcommand{\UTIL}{\textsc{Util}}
\newcommand{\rev}{\REV}
\newcommand{\strategy}{\sigma}
\newcommand{\strat}{\mathbf{\strategy}}
\newcommand{\stratagent}{\strategy_\agent}
\newcommand{\threshold}{\tau}
\newcommand{\thresholdagent}{\threshold_{\agentiter}}
\newcommand{\expthreshold}{T}
\newcommand{\expthresholdagent}{\expthreshold_\agent}
\newcommand{\threshexpected}[2]{\expthresholdagent}					
\newcommand{\threshbound}{\underline{\expthreshold}}						
\newcommand{\threshboundagent}{\threshbound_{\agentiter}}				
\newcommand{\threshboundexpected}[2]{\threshboundagent}			
\newcommand{\util}{u}
\newcommand{\utilagent}{\util_{\agentiter}}
\newcommand{\bidutil}{ u}
\newcommand{\bidutilagent}{\bidutil_{\agentiter}}
\newcommand{\val}{v}
\newcommand{\vals}{\mathbf \val}
\newcommand{\valagent}{v_{\agentiter}}
\newcommand{\valueset}{V}
\newcommand{\valuesagent}{\valueset_\agent}
\newcommand{\valuecdf}{F_{\agent}}
\newcommand{\dist}{F}
\newcommand{\valuecdfs}{\mathbf{\dist}}
\newcommand{\numbidders}{n}
\newcommand{\bidderidx}{i}
\newcommand{\slotidx}{j}
\newcommand{\outcome}{m}
\newcommand{\ctr}{\alpha}
\newcommand{\qual}{\gamma}
\newcommand{\score}{s}
\newcommand{\rankscore}{q}
\newcommand{\pclick}{c_{i,j}}
\newcommand{\quals}{\mathbf{\gamma}}
\newcommand{\bidalt}{b'}
\newcommand{\Radamacher}{{\mathcal R}_T}
\newcommand{\Class}{{\mathcal F}}
\newcommand{\Rad}{\ensuremath{\mathcal{R}}}
\newcommand{\mytitle}{
Efficiency Guarantees from Data
}
\title{\mytitle}
\author{
Darrell Hoy\\
University of Maryland\\
\texttt{darrell.hoy@gmail.com}
\And
Denis Nekipelov\\
University of Virginia\\
\texttt{denis@virginia.edu}\\
\And 
Vasilis Syrgkanis\\
Microsoft Research\\
\texttt{vasy@microsoft.com}
}
\begin{document}

\maketitle

\begin{abstract}
    
    
Analysis of efficiency of outcomes in game theoretic settings has been a main item of study at the intersection of economics and computer science. The notion of the price of anarchy takes a worst-case stance to efficiency analysis, considering instance independent guarantees of efficiency. We propose a data-dependent analog of the price of anarchy that refines this worst-case assuming access to samples of strategic behavior. We focus on auction settings, where the latter is non-trivial due to the private information held by participants. Our approach to bounding the efficiency from data is robust to statistical errors and mis-specification. Unlike traditional econometrics, which seek to learn the private information of players from observed behavior and then analyze properties of the outcome, we directly quantify the inefficiency without going through the private information. We apply our approach to datasets from a sponsored search auction system and find empirical results that are a significant improvement over bounds from worst-case analysis.

\end{abstract}

\section{Introduction}
A major field at the intersection of economics and computer science is the analysis of the efficiency of systems under strategic behavior. The seminal work of \cite{K99,Roughgarden2002} triggered a line of work on quantifying the inefficiency of computer systems, ranging from network routing, resource allocation and more recently auction marketplaces \cite{RST16}. However, the notion of the price of anarchy suffers from the pessimism of worst-case analysis. Many systems can be inefficient in the worst-case over parameters of the model, but might perform very well for the parameters that arise in practice.

Due to the large availability of datasets in modern economic systems, we propose a data-dependent analog of the price of anarchy, which assumes access to a sample of strategic behavior from the system. We focus our analysis on auction systems where the latter approach is more interesting due to the private information held by the participants of the system, i.e. their private value for the item at sale. Since efficiency is a function of these private parameters, quantifying the inefficiency of the system from samples of strategic behavior is non-trivial. The problem of estimation of the inefficiency becomes an econometric problem where we want to estimate a function of hidden variables from observed strategic behavior. The latter is feasible under the assumption that the observed behavior is the outcome of an equilibrium of the strategic setting, which connects observed behavior to unobserved private information. 

Traditional econometric approaches to auctions \cite{GPV,Paarsch2006}, address such questions by attempting to exactly pin-point the private parameters from the observed behavior and subsequently measuring the quantities of interest, such as the efficiency of the allocation. The latter approach is problematic in complex auction systems for two main reasons: (i) it leads to statistical inefficiency, (ii) it requires strong conditions on the connection between observed behavior and private information. Even for a single-item first-price auction, uniform estimation of the private value of a player from $T$ samples of observed bids, can only be achieved at $O(T^{1/3})$-rates \cite{GPV}. Moreover, uniquely identifying the private information from the observed behavior, requires a one-to-one mapping between the two quantities. The latter requires strong assumptions on the distribution of private parameters and can only be applied to simple auction rules. 

Our approach bridges the gap between worst-case price of anarchy analysis and statistically and modeling-wise brittle econometric analysis. We provide a data-dependent analog of recent techniques for quantifying the worst-case inefficiency in auctions \cite{ST13,HHT14,RST16}, that do not require characterization of the equilibrium structure and which directly quantify the inefficiency through best-response arguments, without the need to pin-point the private information. Our approach makes minimal assumptions on the distribution of private parameters and on the auction rule and achieves $\tilde{O}(\sqrt{T})$-rates of convergence for many auctions used in practice, such as the Generalized Second Price (GSP) auction \cite{eos:2007,varian:2009}. We applied our approach to a real world dataset from a sponsored search auction system and we portray the optimism of the data-dependent guarantees as compared to their worst-case counterparts \cite{C14}.

\section{Preliminaries}


We consider the single-dimensional mechanism design setting with $\numbidders$ bidders. The mechanism designer wants to allocate a unit of good to the bidders, subject to some feasibility constraint on the vector of allocations $(x_1,\ldots,x_n)$. Let $\allocspace$ be the space of feasible allocations. Each bidder $\bidderidx$ has a private value $\valagent\in [0,H]$ per-unit of the good, and her utility when she gets allocation $x_i$ and is asked to make a payment $p_i$ is $v_i\cdot x_i-p_i$. The value of each bidder is drawn independently from distribution with CDF $\valuecdf$, supported in $\valuesagent\subseteq \R_+$ and let $\valuecdfs=\times_i\ \valuecdf$ be the joint distribution. 

An auction $\auction$ solicits a bid $b_i\in \bidspace$ from each bidder $i$ and decides on the allocation vector based on an allocation rule $\alloc: \bidspace^n \to \allocspace$ and a payment rule $\bidpayment: \bidspace^n\to \paymentspace$. For a vector of values and bids, the utility of a bidder is:
\begin{equation}
U_i(\bids;\valagent) = \valagent\cdot \allocagent(\bids) - \paymentagent(\bids).
\end{equation}
A strategy $\strat_i:\valuesagent\to \bidspace$, for each bidder $i$, maps the value of the bidder to a bid. Given an auction $\auction$ and distribution of values $\valuecdfs$, a strategy profile $\strat$ is a \emph{Bayes-Nash Equilibrium} (BNE) if each bidder $i$ with any value $v_i\in \valuesagent$ maximizes her utility in expectation over her opponents bids, by bidding $\stratagent(\valagent)$.

The welfare of an auction outcome is the expected utility generated for all the bidders, plus the revenue of the auctioneer, which due to the form of bidder utilities boils down to being the total value that the bidders get from the allocation. Thus the expected utility of a strategy profile $\strat$ is
\begin{equation}
\WEL(\strat; \valuecdfs) = \Ex[\vals\sim \valuecdfs]{\sum\limits_{i\in [n]} \valagent\cdot\allocagent(\strat(\vals))} 
\end{equation}
We denote with $\OPTm(\valuecdfs)$ the expected optimal welfare: $\OPTm(\valuecdfs)= \Ex[\vals\sim \valuecdfs]{\max_{\vec{x}\in \allocspace} \sum_{i\in [n]} \valagent\cdot \allocinst_i}$.

\paragraph{Worst-case Bayes-Nash price of anarchy.} The Bayesian price of anarchy of an auction is defined as the worst-case ratio of welfare in the optimal auction to the welfare in a Bayes-Nash equilibrium of the original auction, taken over all value distributions and over all equilibria. Let $BNE(\auction,\valuecdfs)$ be the set of Bayes-Nash equilibria of an auction $\auction$, when values are drawn from distributions $\valuecdfs$. Then:
\begin{equation}
\poa = \sup_{\valuecdfs, \strat\in BNE(\valuecdfs)} \frac{\OPTm(\valuecdfs)}{\WEL(\sigma; \valuecdfs)}
\end{equation}

\section{Distributional Price of Anarchy: Refining the $\poa$ with Data}\label{sec:poa-data}

We will assume that we observe $T$ samples $\bids^{1:T}=\{\bids^1, \ldots, \bids^T\}$ of bid profiles from running $T$ times an auction $\auction$. Each bid profile $\bids^t$ is drawn i.i.d. based on an unknown Bayes-Nash equilibrium $\sigma$ of the auction, i.e.: let $\datadist$ denote the distribution of the random variable $\sigma(\vals)$, when $\vals$ is drawn from $\valuecdfs$. Then  $b^t$ are i.i.d. samples from $\datadist$. Our goal is to refine our prediction on the efficiency of the auction and compute a bound on the price of anarchy of the auction conditional on the observed data set. More formally, we want to derive statements of the form: conditional on $\bids^{1:T}$, with probability at least $1-\delta$: $\WEL(\sigma; \valuecdfs)\geq \frac{1}{\hat{\rho}}\OPTm(\valuecdfs)$,
where $\hat{\rho}$ is the empirical analogue of the worst-case price of anarchy ratio.

\paragraph{Infinite data limit} We will tackle this question in two steps, as is standard in estimation theory. First we will look at the infinite data limit where we know the actual distribution of equilibrium bids $\datadist$. We define a notion of price of anarchy that is tailored to an equilibrium bid distribution, which we refer to as the \emph{distributional price of anarchy}. In Section \ref{sec:bound} we give a distribution-dependent upper bound on this ratio for any single-dimensional auction. Subsequently, in Section \ref{sec:statistical}, we show how one can estimate this upper bound on the distributional price of anarchy from samples.

Given a value distribution $\valuecdfs$ and an equilibrium $\sigma$, let $D(\valuecdfs, \sigma)$ denote the resulting equilibrium bid distribution. We then define the \emph{distributional price of anarchy} as follows:
\begin{definition}[Distributional Price of Anarchy]
	\label{def:epoa}
 	The \emph{distributional price of anarchy} $\epoa(\datadist)$ of an auction $\auction$ and a distribution of bid profiles $\datadist$, is the worst-case ratio of welfare in the optimal allocation to the welfare in an equilibrium, taken over all distributions of values and all equilibria that could generate the bid distribution $\datadist$:
 	\begin{equation}\label{eq:defepoa}
 	\epoa(\datadist) = \sup_{\valuecdfs, \strat\in BNE(\valuecdfs) \text{ s.t. }  \datafrommech(\valuecdfs, \strat)=\datadist } \frac{\OPTm(\valuecdfs)}{\WEL(\sigma; \valuecdfs)}
 	\end{equation}
\end{definition}
This notion has nothing to do with sampled data-sets, but rather is a hypothetical worst-case quantity that we could calculate had we known the true bid generating distribution $\datadist$.

\paragraph{What does the extra information of knowing $\datadist$ give us?} To answer this question, we first focus on the optimization problem each bidder faces. At any Bayes-Nash equilibrium each player must be best-responding in expectation over his opponent bids. Observe that if we know the rules of the auction and the equilibrium distribution of bids $\datadist$, then the expected allocation and payment function of a player as a function of his bid are uniquely determined:
\begin{align}
\bidallocagent(\bid; \datadist) ~=~& \Ex[\bids_{-i}\sim \datadist_{-i}]{\allocagent(\bid,\bids_{-i})} &
\bidpaymentagent(\bid; \datadist) =~& \Ex[\bids_{-i}\sim \datadist_{-i}]{\paymentagent(\bid,\bids_{-i})}.
\end{align}
Importantly, these functions do not depend on the distribution of values $\valuecdfs$, other than through the distribution of bids $\datadist$.  Moreover, the expected revenue of the auction is also uniquely determined:
\begin{equation}
\rev(\datadist) = \Ex[\bids \sim \datadist]{\sum_i \payment_i(\bids)},
\end{equation} 
Thus when bounding the \emph{distributional price of anarchy}, we can assume that these functions and the expected revenue are known. The latter is unlike the standard price of anarchy analysis, which essentially needs to take a worst-case approach to these quantities.

\paragraph{Shorthand notation} Through the rest of the paper we will fix the distribution $\datadist$. Hence, for brevity we omit it from notation, using $\bidallocagent(b)$, $\bidpaymentagent(\bid)$ and $\rev$ instead of $\bidallocagent(b;\datadist)$, $\bidpaymentagent(\bid;\datadist)$ and $\rev(\datadist)$.

\section{Bounding the Distributional Price of Anarchy}\label{sec:bound}

We first upper bound the \emph{distributional price of anarchy} via a quantity that is relatively easy to calculate as a function of the bid distribution $\datadist$ and hence will also be rather straightforward to estimate from samples of $\datadist$, which we defer to the next section. To give intuition about the upper bound, we start with a simple but relevant example of bounding the distributional price of anarchy in the case when the auction $A$ is the single-item first price auction. We then generalize the approach to any auction $A$.

\subsection{Example: Single-Item First Price Auction}
In a single item first price auction, the designer wants to auction a single indivisible good. Thus the space of feasible allocations $\allocspace$, are ones where only one player gets allocation $x_i=1$ and other players get allocation $0$. The auctioneer solicits bids $b_i$ from each bidder and allocates the good to the highest bidder (breaking ties lexicographically), charging him his bid. Let $\datadist$ be the equilibrium distribution of bids and let $G_i$ be the CDF of the bid of player $i$. For simplicity we assume that $G_i$ is continuous (i.e. the distribution is atomless). Then the expected allocation of a player $i$ from submitting a bid $b$ is equal to $\bidallocagent(\bid) = G_{-i}(\bid)=\prod_{j\neq i}G_j(\bid)$ and his expected payment is $\bidpaymentagent(\bid)=\bid \cdot \bidallocagent(\bid)$, leading to expected utility: $\bidutilagent(\bid;\valagent)=(\valagent-\bid)G_{-i}(b)$. 

The quantity $\epoa$ is a complex object as it involves the structure of the set of equilibria of the given auction. The set of equilibria of a first price auction when bidders values are drawn from different distributions is an horrific object.\footnote{Even for two bidders with uniformly distributed values $U[0,a]$ and $U[0,b]$, the equilibrium strategy requires solving a complex system of partial differential equations, which took several years of research in economics to solve (see \cite{Vickrey1961,Krishna2002})} However, we can upper bound this quantity by a much simpler data-dependent quantity by simply invoking the fact that under any equilibrium bid distribution no player wants to deviate from his equilibrium bid. Moreover, this data-dependent quantity can be much better than its worst-case counterpart used in the existing literature on the price of anarchy.
\begin{lemma}\label{lem:fpa-epoa} Let $A$ be the single item first price auction and let $\datadist$ be the equilibrium distribution of bids, then $\epoa(\datadist)\leq \frac{\mu(\datadist)}{1-e^{-\mu(\datadist)}}$, where $\mu(\datadist)=\frac{\max_{i\in [n]}\E_{\bids_{-i}\sim \datadist_{-i}}[\max_{j\neq i} b_j]}{\E_{\bids\sim \datadist}\left[\max_{i\in [n]} b_i\right]}$. 
\end{lemma}
\begin{proof}
Let $G_i$ be the CDF of the bid of each player under distribution $\datadist$. Moreover, let $\sigma$ denote the equilibrium strategy that leads to distribution $\datadist$. By the equilibrium condition, we know that for all $\valagent\in \valuesagent$ and for all $\bidalt\in \bidspace$, 
\begin{equation}
\bidutilagent(\stratagent(\valagent);\valagent) \geq \bidutilagent(\bidalt;\valagent) = (\valagent - \bidalt)\cdot G_{-i}(\bidalt).
\end{equation}
We will give a special deviating strategy used in the literature \cite{ST13}, that will show that either the players equilibrium utility is large or the expected maximum other bid is high. Let $T_i$ denote the expected maximum other bid which can be expressed as $T_i = \int_{0}^{\infty} 1-G_{-i}(z)dz$. We consider the randomized deviation where the player submits a randomized bid in $z\in [0,\valagent(1-e^{-\mu})]$ with PDF $f(z)=\frac{1}{\mu(\valagent-z)}$. Then the expected utility from this deviation is:
\begin{equation}
\E_{\bidalt}\left[\bidutilagent(\bidalt;\valagent)\right] = \int_{0}^{\valagent(1-e^{-\mu})} (\valagent-z)\cdot G_{-i}(z) f(z) dz = \frac{1}{\mu} \int_{0}^{\valagent(1-e^{-\mu})} G_{-i}(z)dz
\end{equation}
Adding the quantity $\frac{1}{\mu}\int_{0}^{\valagent(1-e^{-\mu})}(1-G_{-i}(z))dz\leq T_i$ on both sides, we get:
$\E_{\bidalt}\left[\bidutilagent(\bidalt;\valagent)\right] + \frac{1}{\mu} T_i \geq \valagent \left(1-e^{-\mu}\right)$.
Invoking the equilibrium condition we get: $\bidutilagent(\stratagent(\valagent);\valagent)+\frac{1}{\mu} T_i \geq \valagent \left(1-e^{-\mu}\right)$. Subsequently, for any $x_i^*\in [0,1]$: 
\begin{equation}\label{eqn:fpa-value-cover}
\bidutilagent(\stratagent(\valagent);\valagent)+\frac{1}{\mu} T_i\cdot x_i^* \geq \valagent\cdot x_i^* \left(1-e^{-\mu}\right). 
\end{equation}
If $x_i^*$ is the expected allocation of player $i$ under the efficient allocation rule $X_i^*(\vals)\equiv 1\{\val_i = \max_{j} \val_j\}$, then taking expectation of Equation \eqref{eqn:fpa-value-cover} over $\valagent$ and adding across all players we get:
\begin{equation}
\sum_i \E_{\val_i}\left[\bidutilagent(\sigma_i(\val_i);\valagent)\right] + \frac{1}{\mu} \E_{\vals}\left[\sum_{i} T_i X_i^*(\vals)\right]\geq \OPTm(\valuecdfs)  \left(1-e^{-\mu}\right)
\end{equation}
The theorem then follows by invoking the fact that for any feasible allocation $x$: $\sum_{i} T_i\cdot x_i \leq \max_{i} T_i = \mu(\datadist) \rev(\datadist)$, using the fact that expected total agent utility plus total revenue at equilibrium is equal to expected welfare at equilibrium and setting $\mu=\mu(\datadist)$.
\end{proof}

\paragraph{Comparison with worst-case $\poa$} In the worst-case, $\mu(\datadist)$ is upper bounded by $1$, leading to the well-known worst-case price of anarchy ratio of the single-item first price auction of $(1-1/e)^{-1}$, irrespective of the bid distribution $\datadist$. However, if we know the distribution $\datadist$ then we can explicitly estimate $\mu$, which can lead to a much better ratio (see Figure \ref{fig:mupoa} in Appendix). Moreover, observe that even if we had samples from the bid distribution $\datadist$, then estimating $\mu(\datadist)$ is very easy as it corresponds to the ratio of two expectations, each of which can be estimating to within an $O(\frac{1}{\sqrt{T}})$ error by a simple average and using standard concentration inequalities. Even thought this improvement, when compared to the worst-case bound might not be that drastic in the first price auction, the extension of the analysis in the next section will be applicable even to auctions where the analogue of the quantity $\mu(\datadist)$ is not even bounded in the worst-case. In those settings, the empirical version of the price of anarchy analysis is of crucial importance to get any efficiency bound.

\paragraph{Comparison with value inversion approach} Apart from being just a primer to our main general result in the next section, the latter result about the data-dependent efficiency bound for the first price auction, is itself a contribution to the literature. It is notable to compare the latter result with the standard econometric approach to estimating values in a first price auction pioneered by \cite{GPV} (see also \cite{Paarsch2006}). Traditional non-parametric auction econometrics use the equilibrium best response condition to pin-point the value of a player from his observed bid, by what is known as value inversion. In particular, if the function: $\bidutilagent(\bidalt;\valagent) = (\valagent - \bidalt)\cdot G_{-i}(\bidalt)$ has a unique maximum for each $\valagent$ and this maximum is strictly monotone in $\valagent$, then given the equilibrium bid of a player $b_i$ and given a data distribution $\datadist$ we can reverse engineer the value $v_i(b_i)$ that the player must have. Thus if we know the bid distribution $\datadist$ we can calculate the equilibrium welfare as $\E_{\bids \sim \datadist}\left[ \sum_{i} v_i(b_i) \cdot X_i(\bids)\right]$. Moreover, we can calculate the expected optimal welfare as: $\E_{\bids \sim \datadist}\left[ \max_{i} v_i(b_i)\right]$. Thus we can pin-point the distributional price of anarchy. 

However, the latter approach suffers from two main drawbacks: (i) estimating the value inversion function $v_i(\cdot)$ uniformly over $b$ from samples, can only happen at very slow rates that are at least $O(1/T^{1/3})$ and which require differentiability assumptions from the value and bid distribution as well as strong conditions that the density of the value distribution is bounded away from zero in all the support (with this lower bound constant entering the rates of convergence), (ii) the main assumption of the latter approach is that the optimal bid is an invertible function and that given a bid there is a single value that corresponds to that bid. This assumption might be slightly benign in a single item first price auction, but becomes a harsher assumption when one goes to more complex auction schemes. Our result in Lemma \ref{lem:fpa-epoa} suffers neither of these drawbacks: it admits fast estimation rates from samples, makes no assumption on properties of the value and bid distribution and does not require invertibility of the best-response correspondence. Hence it provides an upper bound on the distributional price of anarchy that is statistically robust to both sampling and mis-specification errors. 
 
\subsection{Generalizing to any Single-Dimensional Auction Setting}

Our analysis on $\epoa$ is based on the reformulation of the auction rules as an equivalent \emph{pay-your-bid} auction and then bounding the price of anarchy as a function of the ratio of how much a player needs to pay in an equivalent \emph{pay-your-bid} auction, so as to acquire his optimal allocation vs. how much revenue is the auctioneer collecting. For any auction, we can re-write the expected utility of a bid $\bid$:
\begin{align}
\utilagent(\bid;\valagent)= \bidallocagent(\bid) \left(\valagent- \frac{\bidpaymentagent(\bid)}{\bidallocagent(\bid)}\right)
\end{align}
This can be viewed as the same form of utility if the auction was a \emph{pay-your-bid} auction and the player submitted a bid of $\frac{\bidpaymentagent(\bid)}{\bidallocagent(\bid)}$. We refer to this term as the price-per-unit and denote it $\ppc(\bid) = \frac{\bidpaymentagent(\bid)}{\bidallocagent(\bid)}$. Our analysis will be based on the \emph{price-per-unit} allocation rule $\ppcalloc(\cdot)$, which determines the expected allocation of a player as a function of his price-per-unit. Given this notation, we can re-write the utility that an agent achieves if he submits a bid that corresponds to a price-per-unit of $z$ as: $\tilde{u}_i(z;v_i)=\ppcalloc(z)(v_i-z)$. The latter is exactly the form of a \emph{pay-your-bid} auction. 

Our upper bound on the $\epoa$, will be based on the inverse of the PPU allocation rule; let $\thresholdagent(z) = \ppcallocagent^{-1}(z)$ be the price-per-unit of the cheapest bid that achieves allocation at least $z$. More formally, $\tau_i(z) = \min_{\bid | \bidallocagent(\bid)\geq z} \{ \ppc(\bid) \}$. For simplicity, we assume that any allocation $z\in[0,1]$ is achieveable by some high enough bid $\bid$.\footnote{The theory can be easily extended to allow for different maximum achievable allocations by each player, by simply integrating the average threshold only up until the largest such allocation.} Given this we can define the threshold for an allocation:
\begin{definition}[Average Threshold]
The \emph{average threshold} for agent $i$ is
\begin{equation}
T_i = \int_0^{1} \tau_{i}(z)\ dz 
\end{equation}
\end{definition}
In the Appendix we provide a pictorial representation of these quantities. Connecting with the previous section, for a first price auction, the price-per-unit function is $\text{ppu}(b) = b$, the price-per-unit allocation function is $\ppcallocagent(b) = G_{-i}(b)$ and the threshold function is $\tau_i(z) = G_{-i}^{-1}(z)$.  The average threshold $T_i$ is equal to $\int_{0}^1 G_{-i}^{-1}(z)dz = \int_{0}^\infty 1-G_{-i}(b)db$, i.e. the expected maximum other bid. 

We now give our main Theorem, which is a distribution-dependent bound on $\epoa$, that is easy to compute give $\datadist$ and which can be easily estimated from samples of $\datadist$. This theorem is a generalization of Lemma \ref{lem:fpa-epoa} in the previous section. 
\begin{theorem}[Distributional Price of Anarchy Bound]
\label{thm:pybwelf}
For any auction $\auction$ in a single dimensional setting and for any bid distribution $\datadist$, the distributional price of anarchy is bounded by $
\epoa(\datadist) \leq \frac{\mu(\datadist)}{1-e^{-\mu(\datadist)}}$, where $\mu(\datadist) = \frac{\max_{x\in \allocspace} \sum_{i=1}^n T_i\cdot x_i}{\rev(\datadist)}$.
\end{theorem}

Theorem \ref{thm:pybwelf} provides our main method for bounding the distributional price of anarchy. All we need is to compute the revenue 
$\rev$ of the auction and the quantity: 
\begin{equation}\label{eqn:tbestmax}
\textstyle{\Tbestmax = \max_{x\in \allocspace} \sum_{i=1}^{n} T_i \cdot \allocinst_i,}
\end{equation}
under the given bid distribution $\datadist$. Both of these are uniquely defined quantities if we are given $\datadist$. Moreover, once we compute $T_i$, the optimization problem in Equation \eqref{eqn:tbestmax} is simply a welfare maximization problem, where each player's value per-unit of the good is $T_i$. Thus, the latter can be solved in polynomial time, whenever the welfare maximization problem over the feasible set $\allocspace$ is polynomial-time solvable. 

Theorem \ref{thm:pybwelf} can be viewed as a bid distribution-dependent analogue of the \emph{revenue covering framework} \cite{HHT14} and of the smooth mechanism framework \cite{ST13}. In particular, the quantity $\mu(\datadist)$ is the data-depenent analogue of the worst-case $\mu$ quantity used in the definition of $\mu$-revenue covering in \cite{HHT14} and is roughly related to the $\mu$ quantity used in the definition of a $(\lambda,\mu)$-smooth mechanism in \cite{ST13}.

\section{Distributional Price of Anarchy Bound from Samples}\label{sec:statistical}

\newcommand{\bs}{{\bf s}}

In the last section, we assumed we were given distribution $\datadist$ and hence we could compute the quantity $\mu=\frac{\Tbestmax}{\rev}$, which gave an upper bound on the $\epoa$. We now show how we can estimate this quantity $\mu$ when given access to i.i.d. samples $\bids^{1:T}$ from the bid distribution $\datadist$. We will separately estimate $\Tbestmax$ and $\rev$. The latter is simple expectation and thereby can be easily estimated by an average at $\frac{1}{\sqrt{T}}$ rates. For the former we first need to estimate $T_i$ for each player $i$, which requires estimation of the allocation and payment functions $x_i(\cdot; \datadist)$ and $p_i(\cdot; \datadist)$.

Since both of these functions are expected values over the equilibrium bids of opponents, we will approximate them by their empirical analogues: 
\begin{align}
\widehat{\bidallocagent}(b)=~&\frac{1}{T}\sum^T_{t=1}\allocagent(b,\bids_{-i}^t) & 
\widehat{\bidpaymentagent}(b)=~&\frac{1}{T}\sum^T_{t=1}\paymentagent(b,\bids_{-i}^t).
\end{align} 
To bound the estimation error of the quantities $\hat{T}_i$ produced by using the latter empirical estimates of the allocation and payment function, we need to provide a uniform convergence property for the error of these functions over the bid $b$. 

Since $b$ takes values in a continuous interval, we cannot simply apply a union bound. We need to make assumptions on the structure of the class of functions $\Class_{X_i}=\{X_i(b,\cdot): b\in \bidspace\}$ and $\Class_{P_i}=\{P_i(b,\cdot): b\in \bidspace\}$, so as uniformly bound their estimation error. For this we resort to the technology of Rademacher complexity. For a generic class of functions $\Class$ and a sequence
of random variables $Z^{1:T}$, the Rademacher complexity is defined as:
\begin{equation}
\Radamacher(\Class, Z^{1:T})=\E_{\sigma^{1:T}}\left[
\sup\limits_{f \in \Class}
\frac{1}{T}\sum^T_{t=1} \sigma^t
f(Z^t)
\right].
\end{equation}
where each $\sigma^t \in \{\pm1/2\}$ is an i.i.d. Rademacher random variable, which takes each of those values with equal probabilities. The following well known theorem will be useful in our derivations:
\begin{theorem}[\cite{Shalev2014}] Suppose that for any sample $Z^{1:T}$ of size $T$, $\Radamacher(\Class, Z^{1:T}) \leq \Rad_T$ and suppose that functions in $\Class$ take values in $[0,H]$. Then with probability $1-\delta$:
\begin{equation}
\sup_{f\in \Class} \left|\frac{1}{T} \sum_{t=1}^T f(Z_t) - \E[f(Z)]\right|\leq 2\Rad_T + H\sqrt{\frac{2\log(4/\delta)}{T}}
\end{equation}
\end{theorem}
This Theorem reduces our uniform error problem to bounding the Rademacher complexity of classes $\Class_{X_i}$ and $\Class_{P_i}$, since we immediately have the following corollary (where we also use that the allocation functions lie in $[0,1]$ and the payment functions lie in $[0,H]$):
\begin{corollary}\label{theorem:uniform}
Suppose that for any sample $\bids^{1:T}$ of size $T$, the Rademacher complexity of classes $\Class_{X_i}$ and $\Class_{P_i}$ is at most $\Rad_T$. Then with probability $1-\delta/2$, both $\sup\limits_{b \in \bidspace}|
\widehat{\bidallocagent}(b)-\bidallocagent(b)|$ and $\sup\limits_{b \in \bidspace}|
\widehat{\bidpaymentagent}(b)-\bidpaymentagent(b)|$ are at most $2\Rad_T + H\sqrt{2\log(4/\delta)\,/\,T}$.
\end{corollary}

We now provide conditions under which the Rademacher complexity of these classes is $\tilde{O}(1/\sqrt{T})$.
\begin{lemma}\label{assume:logic}
Suppose that $\bidspace=[0,B]$ and for each bidder $i$ and each $\bidagent \in \bidspace$, the
functions $\allocagent(b,\cdot)\,:\,[0,\,B]^{n-1} \mapsto [0,1]$ and  
$\paymentagent(b,\cdot)\,:\,[0,\,B]^{n-1}\mapsto [0,H]$ can be computed as 
finite superposition of (i) multiplication of bid vectors $\bids_{-i}$ with constants;
(ii) comparison indicators ${\bf 1 }\{\cdot>\cdot\}$; (iii) pairwise addition $\cdot + \cdot$. The Rademacher complexity for both classes on a sample of size $T$ is $O\left(\sqrt{\log(T)\,/\,T}\right)$.
\end{lemma}

The proof of this Lemma follows by standard arguments of Rademacher calculus, together with VC arguments on the class of pairwise comparisons. 
Those arguments can be found in Lemma 9.9 in \cite{kosorok:07}. Thereby, we omit its proof. The assumptions of Lemma \ref{assume:logic} can be directly verified, for instance, for the sponsored
search auctions where the constants that multiply each bid correspond to  quality factors of the bidders, e.g. as in 
\cite{eos:2007} and \cite{varian:2009} and then the allocation and the payment is a function of the rank of the weighted bid of a player. In that 
case the price and the allocation rule are determined
solely by the ranks and the values of the score-weighted bids $\quals_i b_{i}$, as well as the position specific quality factors $\alpha_j$, for each position $j$ in the auction.

\par
Next we turn to the analysis of the estimation errors on quantities $T_i$. We consider the following plug-in estimator for $T_i$: We consider the empirical analog of function $\tau_i(\cdot)$ by $\widehat{\tau}_i(z)=\inf\limits_{b \in [0,B],\,\widehat{\bidallocagent}(b)\geq z}
\frac{\widehat{\bidpaymentagent}(b)}{\widehat{\bidallocagent}(b)}$. Then
the empirical analog of $T_i$ is obtained by:
\begin{equation}
\widehat{T}_i=\int\limits^{1}_0 \widehat{\tau}_i(z)\,dz.
\end{equation}
To bound the estimation error of $\widehat{T}_i$, we need to impose an additional condition that ensures that any non-zero allocation requires
the payment from the bidder at least proportional to that allocation.
\begin{assumption}\label{ppu:bound}
We assume that $\bidpaymentagent(\bidallocagent^{-1}(\cdot))$ is Lipschitz-continuous and that the mechanism is worst-case interim individually rational, i.e. $p_i(b)\leq H\cdot x_i(b)$.
\end{assumption}
Under this assumption we can establish that $\tilde{O}(\sqrt{T})$ rates of convergence of $\widehat{T}_i$ to $T_i$ and of the empirical analog $\hat{\bf T}=\max_{x\in \allocspace} \sum_{i=1}^n \hat{T}_i \cdot x_i$ of the optimized threshold to ${\bf T}$ as well as the empirical analog $\widehat{\REV}$ of the revenue  to $\REV$. Thus the quantity $\hat{\mu}=\frac{\hat{\bf T}}{\widehat{\REV}}$, will also converge to $\mu=\frac{{\bf T}}{\REV}$ at that rate. This implies the following final conclusion of this section.
\begin{theorem}\label{DPoA:convergence}
Under Assumption \ref{ppu:bound} and the premises of Lemma \ref{assume:logic}, with probability $1-\delta$:
\begin{equation}
\frac{\OPTm(\valuecdfs)}{\WEL(\sigma;\valuecdfs)}\leq \frac{\widehat{\mu}}{1-e^{-\widehat{\mu}}}+ \tilde{O}\left(n\max\{L,H\}\sqrt{\frac{H\log(n/\delta)}{T}}\right)
\end{equation}
\end{theorem}

\section{Sponsored Search Auction: Model, Methodology and Data Analysis}\label{sec:gsp}

We consider a position auction setting where $k$ ordered positions are assigned to $n$ bidders. An outcome $\outcome$ in a position auction is an allocation of positions to bidders. $\outcome(\slotidx)$ denotes the bidder who is allocated position $\slotidx$; $\outcome^{-1}(\bidderidx)$ refers to the position assigned to bidder $\bidderidx$. When bidder $\bidderidx$ is assigned to slot $\slotidx$, the probability of click $\pclick$ is the product of the click-through-rate of the slot $\ctr_{\slotidx}$ and the quality score of the bidder, $\qual_{\bidderidx}$, so $\pclick = \ctr_{\slotidx}\qual_{\bidderidx}$ (in the data the quality scores for each bidder are varying across different auctions and we used the average score as a proxy for the score of a bidder). Each advertiser has a value-per-click (VPC) $v_i$, which is not observed in the data and which we assume is drawn from some distribution $F_i$. Our benchmark for welfare will be the welfare of the auction that chooses a feasible allocation to maximize the welfare generated, thus $\OPTm= \Ex[\vals]{\max_{\outcome} \sum_i \qual_i \ctr_{\outcome^{-1}(i)} \valagent}$.

We consider data generated by advertisers repeatedly participating in a sponsored search auction. The mechanism that is being repeated at each stage is an instance of a generalized second price auction triggered by a search query. The rules of each auction are as follows: Each advertiser $i$ is associated with a click probability $\qual_\agent$ and a scoring coefficient $s_i$ and is asked to submit a bid-per-click $b_i$. Advertisers are ranked by their rank-score $\rankscore_\agent=\score_\agent\cdot \bid_\agent$ and allocated positions in decreasing order of rank-score as long as they pass a rank-score reserve $r$.  
All the mentioned sets of parameters $\theta=(\vec{s}, \vec{\ctr}, \gamma, r)$ and the bids $\bids$ are observable in the data. 

We will denote with $\pi_{\bids,\theta}(\slotidx)$ the bidder allocated in slot $\slotidx$ under a bid profile $\bids$ and parameter profile $\theta$. We denote with $\pi_{\bids,\theta}^{-1}(\agent)$ the slot allocated to bidder $\agent$. If advertiser $i$  is allocated position $j$, then he pays only when he is clicked and his payment, i.e. his cost-per-click is the minimal bid he had to place to keep his position, which is:
$\textstyle{\text{cpc}_{ij}(\vec{b}; \theta) = \frac{\max\left\{s_{\pi_{\bids,\theta}(j+1)}\cdot b_{\pi_{\bids,\theta}(j+1)},r\right\}}{s_{i}}}$. Mapping this setting to our general model, the allocation function of the auction is $X_i(\bids) = \ctr_{\pi_{\bids,\theta}^{-1}(i)}\cdot \qual$, the payment function is $P_i(\bids)=\ctr_{\pi_{\bids,\theta}^{-1}(i)}\cdot \qual\cdot \text{cpc}_{i\pi_{\bids,\theta}^{-1}(i)}(\vec{b};\theta)$ and the utility function is: 
$\textstyle{U_i(\vec{b}; v_i) = \ctr_{\pi_{\bids,\theta}^{-1}(i)}\cdot \qual_\agent \cdot \left( v_i - \text{cpc}_{i\pi_{\bids,\theta}^{-1}(i)}(\vec{b};\theta)\right)}$.

\paragraph{Data Analysis} We applied our analysis to the BingAds sponsored search auction system. We analyzed eleven phrases from multiple thematic categories. For each phrase we retrieved data of auctions for the phrase for the period of a week. For each phrase and bidder that participated in the auctions for the phrase we computed the allocation curve by simulating the auctions for the week under any alternative bid an advertiser could submit (bids are multiples of cents).

\begin{figure}[ht!]
\centering
	\includegraphics[scale=.45]{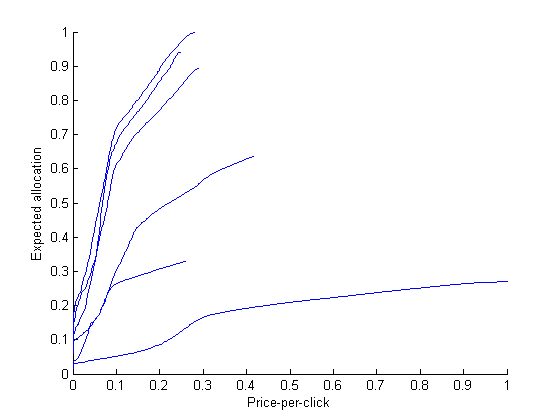}
\qquad
	\begin{tabular}[b]{c|cc}\hline
		& $\hat{\mu}=\frac{\hat{\bf T}}{\widehat{\rev}}$ & $\frac{1}{\epoa}=\frac{1-e^{-\hat{\mu}}}{\hat{\mu}}$ \\
		\hline 
		phrase1 & .511 & .783 \\ 
		phrase2 & .509 & .784 \\ 
		\hline
		phrase3 & 2.966 & .320 \\ 
		phrase4 & 1.556 & .507  \\ 
		\hline
		phrase5 & .386 & .829 \\ 
		phrase6 & .488 & .791\\ 
		phrase7 & .459 & .802 \\ 
		\hline
		phrase8 & .419 & .817\\ 
		phrase9 & .441 & .809\\ 
		\hline
		phrase10 & .377 & .833 \\ 
		phrase11 & .502 & .786 \\ 
	\end{tabular}
\caption{(left) Examples of price-per-unit allocation curves for a subset of six advertisers for a specific keyword during the period of a week. All axes are normalized to $1$ for privacy reasons. (right) Distributional Price of Anarchy analysis for a set of eleven search phrases on the BingAds system. }\label{table:results}\label{fig:alloc-thres}
\end{figure}

See Figure~\ref{fig:alloc-thres} for the price-per-unit allocation curves $ \ppcallocagent(\cdot)=\thresholdagent^{-1}(\cdot)$ for a subset of the advertisers for a specific search phrase. We estimated the \emph{average threshold} $\hat{T}_i$ for each bidder by numerically integrating these allocation curves along the $y$ axis. We then applied the approach described in Section~\ref{sec:poa-data} for each of the search phrases, computing the quantity $\hat{\bf T}=\max_{x\in \allocspace} \sum_{i\in [n]} \hat{T}_i \cdot x_i = \max_{m(\cdot)} \sum_{i} \hat{T}_i\cdot \gamma_i\cdot \alpha_{m^{-1}(i)}$. The latter optimization is simply the optimal assignment problem where each player's value-per-click is $\hat{T}_i$ and can be performed by greedily assigning players to slots in decreasing order of $\hat{T}_i$. We then estimate the expected revenue by the empirical revenue $\widehat{\rev}$. 

We portray our results on the estimate $\hat{\mu}=\frac{\hat{\bf T}}{\widehat{\rev}}$ and the implied bound on the distributional price of anarchy for each of the eleven search phrases in Table \ref{table:results}. Phrases are grouped based on thematic category. Even though the worst-case price of anarchy of this auction is unbounded (since scores $s_i$ are not equal to qualities $\gamma_i$, which is required in worst-case $\poa$ proofs \cite{C14}), we observe that empirically the price of anarchy is very good and on average the guarantee is approximately $80\%$ of the optimal. Even if $s_i=\gamma_i$ the worst-case bound on the $\poa$ implies guarantees of approx. $34\%$ \cite{C14}, while the $\epoa$ we estimated implies significantly higher percentages, portraying the value of the empirical approach we propose.

\bibliographystyle{plain}
\bibliography{bibs}

\newpage
\begin{appendix}
\setcounter{page}{1}

 \begin{center}
 \bf \Large Supplementary material for \\ ``\mytitle''
 \end{center}
 
\section{Supplementary Figures}
\subsection{Distributional Price of Anarchy as a Function of $\mu(\datadist)$}

\begin{figure}[ht!]
	\centering
	\begin{tikzpicture}[xscale=0.5, yscale=0.5, smooth]
	\begin{axis}[scale only axis,
	xlabel={\large $\mu$},
	ylabel={\large Price of Anarchy},
	ytick={1, 2, ..., 4},
	xtick={0, 1, ..., 4},
	ymin=1,
	ymax=4,
	xmin=0,
	xmax=4,]
	\addplot[ultra thick] {x/(1-1/(e^x))};
	\end{axis}
	
	\end{tikzpicture}
	\caption{
	 The upper bound on the distributional price of anarchy of an auction $\frac{\mu(\datadist)}{1-e^{-\mu(\datadist}}$ as a function of $\mu(\datadist)$. \label{fig:mupoa}}
\end{figure}

\subsection{Pictorial Representation of Threshold Functions}

\begin{figure}[ht!]
\small
	\centering
	\begin{minipage}[t]{0.45\textwidth}
\begin{tikzpicture}[xscale=3.5, yscale=3.5, domain=0:0.9, smooth]
	\draw  (1.12,1) .. controls ( 1., 0.95) and ( 1.05,1) .. (.91,0.7) .. controls ( 0.88,0.4) and ( 0.8,0.32) .. (0.2,0.0) --(0,0);	
	
	\draw[-] (0,0) -- (0,1) node[left] {$1$};
	\draw[-] (0,0) -- (1.1,0);
	\node at (1, -0.25) {Price-per-unit (PPU)};
	\node at (1.2,1.1) {$\ppcallocagent(\ppc) = \thresholdagent^{-1}(\ppc)$};
	\node[rotate=90] at (-0.15,0.5) { E[Allocation]};
	\node at (1.2, -0.09) {$\valagent$};
	\draw[pattern=north east lines, pattern color=lightgray] ( 0.7,0) rectangle ( 1.2,0.29);
	\node [fill=white!50] at (0.95,0.15){$\bidutilagent(\bid)$};
	\draw[] (-0.02, 0.3) -- (0, 0.3);		
	\node at (0.7,-0.09) {$\ppc(\bid)$};
	\draw[] (1.2, 0) -- (1.2, -0.025);
	\draw[] (0.7, 0) -- (0.7, -0.025);
	\end{tikzpicture}
	\caption{
		 For any bid $\bid$ with PPC $\ppc(\bid)$, the area of a rectangle between $(\ppc(\bid), \ppcallocagent(\ppc(\bid)))$ and $(\valagent, 0)$ on the bid allocation rule is the expected utility $\bidutilagent(\bid)$. The BNE action $\bid^*$ is chosen to maximize this area. \label{fig:fpautilagent}}
\end{minipage}\hspace*{1cm}
\begin{minipage}[t]{0.45\textwidth}
\begin{tikzpicture}[xscale=3.5, yscale=3.5, domain=0:0.9, smooth]
	\draw  (1.12,1) .. controls ( 1., 0.95) and ( 1.05,1) .. (.91,0.7) .. controls ( 0.88,0.4) and ( 0.8,0.32) .. (0.2,0.0) --(0,0);	
	\draw[ pattern=north west lines, pattern color=lightgray]  (0,0) -- (0, 1) -- (1.12, 1) .. controls ( 1., 0.95) and ( 1.05,1) .. (.91,0.7) .. controls ( 0.88,0.4) and ( 0.8,0.32) .. (0.2,0.0) -- (0,0);

\draw[-] (0,0) -- (0,1) node[left] {$1$};

\node [fill=white] at ( 0.5,0.68) {\footnotesize $ T_i$};
\node at (-0.12, 0.83) {$\allocaltagent$};
\node at (1.1, 0.7) {$\ppcalloc(\ppc)$};
\draw[-] (0,0) -- (1.1,0) node[below] {Price-per-unit (PPU)};
\node[rotate=90] at (-0.1,0.5) {E[Allocation]};
\end{tikzpicture}
\caption{The average threshold is the area to the left of the price-per-unit allocation rule, integrate from $0$ to $1$.\label{fig:Tb}}
\end{minipage}
\end{figure}
\section{Omitted Proofs from Section \ref{sec:poa-data}}
{\bf Theorem \ref{thm:pybwelf} (restatement)}
{\it 
For any auction $\auction$ in a single dimensional setting and for any bid distribution $\datadist$, the distributional price of anarchy is bounded by $
\epoa(\datadist) \leq \frac{\mu(\datadist)}{1-e^{-\mu(\datadist)}}$, where $\mu(\datadist) = \frac{\max_{x\in \allocspace} \sum_{i=1}^n T_i\cdot x_i}{\rev(\datadist)}$.
}
\begin{proof}
Our proof is a based on a data-dependent analog of the value and revenue covering framework of \cite{HHT14}. First we show that even without having distributional knowledge, the threshold functions are related to the equilibrium utility of a bidder and any target utility at any Bayes-Nash equilibrium. Specifically, either the utility of a bidder at a Bayes-Nash is high compared to his value or the average threshold $T_i$ is high.

\begin{lemma}[Value Covering]\label{lem:vc}
For any bidder $i$ with value $\valagent$, for any allocation amount $\allocinst\in [0,1]$ and for any $\mu\geq 1$, 
\begin{equation}
\utilagent(\valagent) + \frac{1}{\mu} T_i\cdot \allocinst_i \geq \frac{1-e^{-\mu}}{\mu}  \valagent \cdot \allocinst_i. \label{eq:VCdef}
\end{equation}
where $\utilagent(\valagent) = \utilagent(\strat_i(\valagent);\valagent)$.
\end{lemma}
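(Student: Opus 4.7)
The plan is to reduce the claim to a single-variable optimization problem using the best-response characterization of $\utilagent(\valagent)$. First, I would invoke the definition of $\tau_i$ together with $\utilagent(\valagent) = \max_b \bidutilagent(b;\valagent)$ to derive the pointwise inequality
\begin{equation*}
\utilagent(\valagent) \;\geq\; z\,\valagent - z\,\tau_i(z) \qquad \text{for every } z \in [0,1].
\end{equation*}
This is the only ``auction-theoretic'' input to the argument: it simply says that the best response dominates the deviation to the cheapest bid achieving interim allocation at least $z$, whose expected payment per unit is $\tau_i(z)$. Since utility is nonnegative (the bidder can always bid $0$), the inequality is substantive only when $\valagent \geq \tau_i(z)$.

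Next, I would split into two cases. When $\utilagent(\valagent) \geq \valagent\,\allocinst$, the desired inequality \eqref{eq:VCdef} is immediate from $\mu \geq 1-e^{-\mu}$, which holds for all $\mu \geq 0$ (the difference has derivative $1-e^{-\mu} \geq 0$ and vanishes at $0$). Otherwise, let $z^{*} = \utilagent(\valagent)/\valagent < \allocinst$. Rearranging the pointwise best-response bound yields $\tau_i(z) \geq \valagent - \utilagent(\valagent)/z$ for $z \geq z^{*}$; integrating this over $[z^{*},\allocinst]$ and using $\tau_i \geq 0$ on $[0,z^{*}]$ gives the closed-form lower bound
\begin{equation*}
T_i(\allocinst) \;\geq\; \int_{z^{*}}^{\allocinst}\!\!\left(\valagent - \frac{\utilagent(\valagent)}{z}\right)dz \;=\; \valagent\,\allocinst - \utilagent(\valagent) - \utilagent(\valagent)\,\ln\!\bigl(\valagent\,\allocinst/\utilagent(\valagent)\bigr).
\end{equation*}

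Finally, adding $\mu\,\utilagent(\valagent)$ to both sides and dividing through by $\valagent\,\allocinst$, the claim reduces to the scalar inequality $h(t) := t\bigl[(\mu-1) + \ln t\bigr] \geq -e^{-\mu}$ for $t := \utilagent(\valagent)/(\valagent\,\allocinst) \in (0,1)$. A routine calculus check ($h'(t) = \mu + \ln t$ vanishes at $t = e^{-\mu}$, and $h''(t) = 1/t > 0$) establishes a unique global minimum of $h$ equal to $-e^{-\mu}$, attained at $t = e^{-\mu}$, which closes the argument. The main obstacle is less any one calculation than spotting the right cutoff $z^{*} = \utilagent(\valagent)/\valagent$ and the right scaling variable $t$; these are precisely the choices that make the $1-e^{-\mu}$ constant tight and explain its origin. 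Note that the argument never uses $\mu \geq 1$ beyond the trivial case, so the inequality in fact holds for all $\mu \geq 0$.
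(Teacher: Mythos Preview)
Your proposal is correct and follows essentially the same route as the paper's proof: both establish the pointwise lower bound $\tau_i(z)\ge \max\{0,\,\valagent-\utilagent(\valagent)/z\}$ from best-responding, integrate it to get a closed-form lower bound on $T_i(\allocinst_i)$, and then minimize the resulting expression over the ratio $\utilagent(\valagent)/(\valagent\allocinst_i)$ (the paper minimizes over $\utilagent(\valagent)/\valagent$, which is the same up to the fixed factor $\allocinst_i$). Your explicit case split for $\utilagent(\valagent)\ge \valagent\allocinst_i$ is a clean way to handle the regime the paper's sketch leaves implicit, and your observation that $\mu\ge 1$ is never used is correct.
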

\begin{proof}
The proof proceeds analogously to the proof of value covering in \cite{HHT14}. For simplicity of notation we drop the subscript $i$, as we are focusing on a single agent and some threshold function $\tau(\cdot)$. Observe that since a player is at equilibrium it must be that for any target expected allocation $z$ he does not want to deviate to a bid that corresponds to a price-per-unit $\text{ppu}(b)=\tau(z)$, which would yield him expected allocation at least $z$:
\begin{equation}
u(v) \geq z\cdot (v-\tau(z)) \implies \tau(z) \geq v - \frac{u(v)}{z} 
\end{equation}
Moreover, in any case $\tau(z)\geq 0$, by definition. Thus if we define $\underline{\threshold}(z)= \max (0, v-\util(\val)/z )$, then we have  $\threshold(z)\geq \underline{\threshold}(z)$ and hence $T\geq \underline{T}=\int_{0}^1 \underline{\tau}(z)dz$.

Evaluating the integral gives $\threshbound = \val - \util(\val) + \util(\val) \log \frac{\util(\val)}{\val}$. Thus
\begin{equation*}
\util(\val) + \frac{1}{\mu} \threshbound = \util(\val) + \frac{1}{\mu}\left(\val - \util(\val) + \util(\val) \log \frac{\util(\val)}{\val}\right) 
\end{equation*}
and by dividing over by $\val$:
\begin{equation}
\frac{\util(\val) + \frac{1}{\mu} \threshbound}{\val} = \frac{\util(\val)}{\val} + \frac{1}{\mu}\left(1 - \frac{\util(\val)}{\val} + \frac{\util(\val)}{\val} \log \frac{\util(\val)}{\val}\right) \label{eq:efff}
\end{equation}

The right side of Equation~\eqref{eq:efff} is convex in $\frac{\util(\val)}{\val}$, so we can minimize it by taking first-order conditions of the quantity $y + \frac{1}{\mu}\left(1 - y + y\log y\right)$ with respect to variable $y$, giving
\begin{equation*}
0 = 1 + \frac{1}{\mu}\log y \implies y= e^{-\mu}. 
\end{equation*}
Leading to a minimum value of that quantity of $\frac{1-e^{-\mu}}{\mu}$. Thus the right side of Equation~\eqref{eq:efff} is at least this quantity, giving our desired result, 
\begin{equation*}
\frac{\util(\val) + \frac{1}{\mu} \threshbound}{\val}\geq \frac{1-e^{-\mu}}{\mu}.
\end{equation*}
The Lemma follows by the fact that $T\geq \underline{T}$ and $x\in [0,1]$, which allows us to multiply and divide the fraction by $x$ and then remove the $x$ in front of the quantity $\util(\val)$.
\end{proof}

Given the value covering lemma we now proceed to proving the Theorem. Let $\allocopt(\vals)$ be the welfare optimal allocation rule for valuation profile $\vals$, i.e. the one that solves the optimization problem $\max_{x\in \allocspace} \sum_{i=1}^n v_i\cdot x_i$. Applying the value covering inequality of Equation~\eqref{eq:VCdef} with respect to the optimal allocation quantity $\allocoptagent(\vals)$ gives that for each bidder $i$ with value $\valagent$,  
\begin{equation}
\utilagent(\valagent) + \frac{1}{\mu} T_i\cdot \allocoptagent(\vals)  \geq \frac{1-e^{-\mu}}{\mu} \valagent \cdot \allocoptagent(\vals) . \label{eq:utilvc}
\end{equation}

The quantity $\valagent\cdot \allocoptagent(\vals)$ is exactly agent $i$'s expected contribution to the welfare of the optimal auction. Moreover, by the definition of $\mu(\datadist)$:
\begin{align}
\mu(\datadist) \cdot \REV\geq \max_{x\in \allocspace} \sum_{i=1}^n T_i\cdot x_i \geq   \Ex[\vals]{ \sum_{i} T_i\cdot \allocoptagent(\vals)} 
\label{eq:revcexp}
\end{align}
Let $\UTIL$ denote the expected equilibrium total utility of the bidders in the auction. By Equations \eqref{eq:utilvc} and \eqref{eq:revcexp} we obtain:
\begin{align*}
\UTIL + \REV &\geq \Ex[\vals]{\sum_i \utilagent(\valagent)} + \Ex[\vals]{\sum_i \frac{1}{\mu(\datadist)} T_i\cdot \allocoptagent(\vals)}\\
&= \sum_i \Ex[\vals]{\utilagent(\valagent)+\frac{1}{\mu(\datadist)} T_i\cdot \allocoptagent(\vals)}\\
&\geq \sum_i \Ex[\vals]{\frac{1-e^{-\mu(\datadist)}}{\mu(\datadist)}\valagent\cdot \allocoptagent(\vals)}
= \frac{1-e^{-\mu(\datadist)}}{\mu(\datadist)} \OPTm(\valuecdfs)
\end{align*}
Since $\WEL(\sigma; \valuecdfs) = \UTIL+\REV$, we have our desired result:
\begin{equation*}
\WEL(\sigma; \valuecdfs)\geq \frac{1-e^{-\mu(\datadist)}}{\mu(\datadist)} \OPTm(\valuecdfs).
\end{equation*}
\end{proof}

\section{Omitted Proofs from Section \ref{sec:statistical}}
We begin by showing convergence of $\hat{T}_i$ to $T_i$ and $\hat{\bf T}$ to ${\bf T}$.

\begin{lemma}[Bounding Estimated Average Thresholds]\label{consistency:per-player}
Suppose that the premises of Lemma \ref{assume:logic} hold and that the function $p_i(x_i^{-1}(\cdot))$ is $L$-Lipschitz continuous. Then for each player $i$ with probability $1-\delta$:
\begin{equation}\label{eqn:per-player-thr}
|\widehat{T}_i-{T}_i| \leq \tilde{O}\left(\max\{L,H\}\sqrt{\frac{H\log(1/\delta)}{T}}\right)
\end{equation}
\end{lemma}
\begin{proof}
Since we focus on a single player $i$, we drop index $i$ and denote $\tau(\cdot),p(\cdot),x(\cdot)$ for $\tau_i(\cdot),p_i(\cdot), x_i(\cdot)$ and similarly for their estimated quantities. Recall that $\tau(z)=\inf_{x(b) \geq z}\frac{p(b)}{x(b)}$
and $\widehat{\tau}(z)=\inf_{\widehat{x}(b) \geq z}\frac{\widehat{p}(b)}{\widehat{x}(b)}$. Moreover, we denote with $\epsilon_x= \sup\limits_{b \in \bidspace}|
\widehat{x}(b)-x(b)|$ and $\epsilon_p=\sup\limits_{b \in \bidspace}|
\widehat{p}(b)-p(b)|$, the uniform errors on the payment and allocation curve, which be the assumptions of the theorem are upper bounded, with probability $1-\delta$, by $\tilde{O}\left(\sqrt{\frac{H\log(1/\delta)}{T}}\right)$.

Our goal is to bound the quantity:
\begin{align*}
|\hat{T}_i - T_i| = \left|\int_{0}^1 (\hat{\tau}(z)-\tau(z))dz \right| \leq \int_{0}^1 |\hat{\tau}(z)-\tau(z)|dz 
\end{align*}
By individual rationality we have that $p(b)\leq H x(b)$. Thus we get that $0\leq \tau(z),\hat{\tau}(z) \leq H$ and therefore $|\hat{\tau}(z)-\tau(z)|\leq H$. Hence: 
\begin{align*}
|\hat{T}_i - T_i| \leq \int_0^{2\epsilon_x}|\widehat{\tau}(z)-{\tau}(z)|\,dz+\int_{2\epsilon_x}^1 |\widehat{\tau}(z)-{\tau}(z)|\,dz \leq 2H\epsilon_x + \underbrace{\int_{2\epsilon_x}^1 |\widehat{\tau}(z)-{\tau}(z)|\,dz}_{A}
\end{align*}
It remains to bound quantity $A$. We consider any $z\in [2\epsilon_x,1]$. By the definition of $\tau$ and $\hat{\tau}$, we obtain
\begin{align*}
|\widehat{\tau}(z)-{\tau}(z)|=~&\left|
\inf_{\widehat{x}(b) \geq z}\frac{\widehat{p}(b)}{\widehat{x}(b)}-
\inf_{x(b) \geq z}\frac{p(b)}{x(b)}
\right|\\
\leq~&
\underbrace{\left|
\inf_{\widehat{x}(b) \geq z}\frac{\widehat{p}(b)}{\widehat{x}(b)}-
\inf_{\widehat{x}(b) \geq z}\frac{p(b)}{x(b)}
\right|}_{C}
+ \underbrace{\left|
\inf_{\widehat{x}(b) \geq z}\frac{{p}(b)}{{x}(b)}-
\inf_{x(b) \geq z}\frac{{p}(b)}{{x}(b)}
\right|}_{D}.
\end{align*}
We now upper bound separately the two terms $C$ and $D$. 

\noindent{\it Bounding $C$.} For term $C$ we have:
\begin{align*}
C\leq\sup_{\widehat{x}(b) \geq z}
\left|
\frac{\widehat{p}(b)}{\widehat{x}(b)}
-
\frac{p(b)}{x(b)}
\right|=~&\sup_{\widehat{x}(b) \geq z}
\frac{1}{\widehat{x}(b)x(b)}\cdot\left|\widehat{p}(b)x(b)
-
p(b)\cdot \widehat{x}(b)
\right|\\
=~&\sup_{\widehat{x}(b) \geq z}
\frac{1}{\widehat{x}(b)x(b)}\cdot\left|\widehat{p}(b)x(b)-p(b) x(b)+p(b) x(b)
-
p(b)\cdot \widehat{x}(b)
\right|\\
\leq~&\sup_{\widehat{x}(b) \geq z}
\frac{1}{\widehat{x}(b)}\cdot\left|p(b)-\widehat{p}(b)\right| + \sup_{\widehat{x}(b) \geq z}\frac{p(b)}{\widehat{x}(b)x(b)}\cdot\left|x(b)-\widehat{x}(b)\right|\\
\leq~&\frac{1}{z}\epsilon_p + \frac{1}{z}\sup_{\widehat{x}(b) \geq z}\frac{p(b)}{x(b)}\cdot\left|x(b)-\widehat{x}(b)
\right|
\end{align*}
Since $z\geq 2\epsilon_x$, we have that for any $b$, with $\hat{x}(b)\geq z$, it must also be that: $x(b)\geq \hat{x}(b)-\epsilon_x\geq z-\epsilon_x>0$, which implies that $\frac{p(b)}{x(b)}\leq H$ (by individual rationality). Which leads to the bound:
\begin{align}
C\leq \frac{1}{z}\epsilon_p + \frac{H}{z}\epsilon_x
\end{align}

\noindent{\it Bounding $D$.} For quantity $D$, we proceed as follows. Let $Z=\{x(b): x(b)\geq z\}$ and $\hat{Z}=\{x(b): \hat{x}(b)\geq z\}$ (note that in the second set, we still use $x(b)$ to define the possible allocations, and only the set of bids is defined based on the estimated allocation function). Then:
\begin{align*}
D=\left|
\inf_{\widehat{x}(b) \geq z}\frac{p(b)}{x(b)}-
\inf_{x(b) \geq z}\frac{p(b)}{x(b)}
\right|
&= \left|
\inf_{t\in \hat{Z}}\frac{p(x^{-1}(t))}{t}-
\inf_{t\in Z}\frac{p(x^{-1}(t))}{t}
\right|,
\end{align*}
By the fact that the function $p(x^{-1}(t))$ is $L$-Lipschitz, we can bound the derivative of the function $Q(t) = \frac{p(x^{-1}(t))}{t}$ by:
\begin{equation*}
|Q'(t)| = \left|\frac{(p(x^{-1}(t))'}{t}-\frac{p(x^{-1}(t))}{t^2}\right|\leq 2\max\left\{\frac{L}{t},\frac{p(x^{-1}(t))}{t^2}\right\} 
\end{equation*}
Observe that $p(x^{-1}(t))$ is the expected payment required to get an expected allocation of $t$. By individual rationality, for any $t>0$, the latter is at most $H\cdot t$. Thus:
\begin{equation}
|Q'(t)| \leq \frac{2\max\{L,H\}}{t} 
\end{equation}
Observe that for any $t\in Z\cup \hat{Z}$, $t\geq z-\epsilon_x\geq \frac{z}{2}>0$.
Hence, the function $Q(t)$ is $\frac{4\max\{L,H\}}{z}$-Lipschitz in $Z\cup \hat{Z}$. Moreover, observe that for any $\hat{t}\in \hat{Z}$, there exists $t\in Z$: $|t-\hat{t}|\leq\epsilon_x$. Hence, the two infima in expression $D$ can defer by at most:
\begin{align}
D\leq \frac{4\max\{L,H\}}{z}\epsilon_x
\end{align}

\noindent{\it Concluding.} Thus we can bound quantity $A$ by:
\begin{align*}
A\leq~& \int_{2\epsilon_x}^1 \frac{\epsilon_p + 5\max\{L,H\}\epsilon_x}{z} dz
\leq \log\left(\frac{1}{2\epsilon_x}\right) \left(\epsilon_p + 5\max\{L,H\}\epsilon_x\right)
\end{align*}
Combining all the above, we conclude that:
\begin{equation}\label{eqn:pre-final-T-bound}
|\hat{T}_i - T_i| \leq 2H\epsilon_x + \log\left(\frac{1}{2\epsilon_x}\right) \left(\epsilon_p + 5\max\{L,H\}\epsilon_x\right)
\end{equation}
Since with probability $1-\delta$, both $\epsilon_x$ and $\epsilon_p$ are of $\tilde{O}\left(\sqrt{\frac{H\log(1/\delta)}{T}}\right)$, we get that with the same probability: 
\begin{equation}
|\hat{T}_i - T_i|\leq \tilde{O}\left(\max\{L,H\}\sqrt{\frac{H\log(1/\delta)}{T}}\right)
\end{equation}
since the quantity $\log(1/\epsilon_x)$, can only introduce $\log(T)$ factors in the RHS of Equation \eqref{eqn:pre-final-T-bound}.
\end{proof}

\begin{lemma}[Bounding Estimated Optimal Threshold Quantity]\label{consistency:objective}
Let $\hat{\bf T}=\max_{x\in \allocspace} \sum_{i=1}^n \hat{T}_i \cdot x_i$. Then with probability $1-\delta$:
\begin{equation}
|\hat{\bf T} - {\bf T}|\leq \tilde{O}\left(n\max\{L,H\}\sqrt{\frac{H\log(n/\delta)}{T}}\right)
\end{equation}
\end{lemma}
\begin{proof}
By Lemma \ref{consistency:per-player} and a union bound across players, we have that with probability $1-\delta$:
\begin{equation}
\sup_{i\in [n]} |\hat{T}_i - T_i|\leq \tilde{O}\left(\max\{L,H\}\sqrt{\frac{H\log(n/\delta)}{T}}\right)
\end{equation}
Moreover, for any allocation $x\in \allocspace$:
\begin{equation}
\left|\sum\limits_i (\widehat{T}_i-T_i)x_i\right|\leq \sum\limits_i |\widehat{T}_i-T_i|\leq \tilde{O}\left(n\max\{L,H\}\sqrt{\frac{H\log(n/\delta)}{T}}\right)
\end{equation}
Thus we can bound the error in ${\bf T}$ as:
\begin{align*}
\left|\widehat{{\bf T}}-{\bf T} \right| =~&
\left|\max_{x\in \allocspace}\sum\limits_i (\widehat{T}_i\cdot x_i-\max_{x\in \allocspace} \sum_{i\in [n]} T_i x_i \right|
\leq \sup_{x\in \allocspace}\left|\sum\limits_i (\widehat{T}_i-T_i)x_i\right|\\
\leq~& \tilde{O}\left(n\max\{L,H\}\sqrt{\frac{H\log(n/\delta)}{T}}\right)
\end{align*}
The latter concludes the proof of the theorem.
\end{proof}

We are now ready to show our main estimation theorem.

\restate{Theorem \ref{DPoA:convergence}}{
Under Assumption \ref{ppu:bound} and the premises of Lemma \ref{assume:logic}, with probability $1-\delta$:
\begin{equation}
\frac{\OPTm(\valuecdfs)}{\WEL(\sigma;\valuecdfs)}\leq \frac{\widehat{\mu}}{1-e^{-\widehat{\mu}}}+ \tilde{O}\left(n\max\{L,H\}\sqrt{\frac{H\log(n/\delta)}{T}}\right)
\end{equation}}

\begin{proof}
First note that $\widehat{\REV}=\frac{1}{T}\sum^T_{t=1} \bids^{i:T}$. Thus, using standard Hoeffding's inequality
we obtain that 
$$
P\left(\left|\widehat{\REV}-{\REV} \right|>\tau  \right) \leq 2\,e^{-\frac{2\,T\,\tau}{(nH)^2}}.
$$ 
Thus with probability at least $1-\delta/2$
$$
\left|\widehat{\REV}-{\REV} \right|\leq nH\,\sqrt{\frac{\log(4/\delta)}{2\,T}}.
$$
Combining this result with the result of Lemma \ref{consistency:objective}, we find that 
with probability at least $1-\delta$
$$
|\widehat{\mu}-\mu| \leq\tilde{O}\left(n\max\{L,H\}\sqrt{\frac{H\log(n/\delta)}{T}}\right).
$$
Let $\widehat{\rho}=\widehat{\mu}
\left( 1-e^{-\widehat{\mu}} \right)^{-1}$. Since the function $f(x)=x/(1-e^{-x})$ is Lipschitz we can conclude that 
$$
|\widehat{\rho}-\rho| \leq \tilde{O}\left(n\max\{L,H\}\sqrt{\frac{H\log(n/\delta)}{T}}\right)
$$
with probability at least $1-\delta$
\end{proof}

\end{appendix}
\end{document}